\documentclass[a4paper]{article}
\usepackage[utf8]{inputenc}
\usepackage{amsmath}
\usepackage{amsfonts}
\usepackage{amsthm}
\usepackage{listings}
\usepackage{mdframed}
\usepackage{url}

\newtheorem{theorem}{Theorem}
\newtheorem{lemma}{Lemma}
\theoremstyle{definition}

\newtheorem{definition}{Definition}

\newcommand{\s}{\mathbf{scan}}
\renewcommand{\ss}{\mathbf{sscan}}
\newcommand{\zip}{\mathbf{zip}}
\newcommand{\sort}{\mathbf{sort}}
\renewcommand{\ss}{\mathbf{sscan}}
\newcommand{\map}{\mathbf{map}}
\newcommand{\flatmap}{\mathbf{flatmap}}
\renewcommand{\zip}{\mathbf{zip}}
\renewcommand{\sort}{\mathbf{sort}}
\newcommand{\shift}{\mathbf{shift}}
\newcommand{\mzip}{\mathbf{mapzip}}
\newcommand{\fmzip}{\mathbf{flatmapzip}}
\newcommand{\bm}{\mathbf{broadcastmax}}

\newcommand{\RR}{\mathbb{R}}
\DeclareMathOperator*{\F}{F}

\title{Aggregating over Dominated Points by Sorting, Scanning, Zip and Flat Maps}
    \author{Jacek Sroka \and Jerzy Tyszkiewicz}
    \date{University of Warsaw}

\begin{document}
\begin{titlepage}

\maketitle
\begin{abstract}
    Prefix aggregation operation (also called scan), and its particular case, prefix summation, is an important parallel primitive and enjoys a lot of attention in the research literature. It is also used in many algorithms as one of the steps. 
    
    Aggregation over dominated points in $\RR^m$ is a multidimensional generalisation of prefix aggregation. It is also intensively researched, both as a parallel primitive and as a practical problem, encountered in computational geometry, spatial databases and data warehouses.
    
    In this paper we show that, for a constant dimension $m$, aggregation over dominated points in $\RR^m$ can be computed by $O(1)$ basic operations that include sorting the whole dataset, zipping sorted lists of elements, 
    computing prefix aggregations of lists of elements and flat maps, which expand the data size from initial $n$ to $n\log^{m-1}n$.
    
    Thereby we establish that prefix aggregation suffices to express aggregation over dominated points in more dimensions, even though the latter is a far-reaching generalisation of the former. Many problems known to be expressible by aggregation over dominated points become expressible by prefix aggregation, too. 
    
    We rely on a small set of primitive operations which guarantee an easy transfer to various distributed architectures and some desired properties of the implementation.
\end{abstract}
\end{titlepage}

\section{Introduction}

In this paper we derive a tight relation between two computing problems: prefix aggregation and aggregation over dominated points. We first describe the problems alone, in a framework which covers them both. 

The input data is assumed to be a collection of tuples composed of sortable atomic elements. We are interested in aggregation in general.  We assume the data consists of two sets: a set $D$ of data points and a set $Q$ of queries where for each query there will be a subset $\hat q\subseteq D$ of data points it matches.

We are also given an associative and commutative function $\F: A \times A \to A$ with unit $e$, i.e., neutral element of $\F$. We use $\F_{d\in S} w(d)$ and the like for the application of $\F$ to a set of elements $S \subseteq D$, the same way as $\Sigma$ is used as a generalisation of $+$ to a multiset of numbers.

Data points have weights in $A$, defined by a function $w:D \to A$. The goal is to compute $\F_{d\in \hat q}w(d)$ for all $q\in Q.$ The two problems we consider in the paper are specific cases of the above schematic outline. We assume $D$ and $Q$ to be large.

\subsection{Prefix aggregation and its role as a parallel primitive}

Prefix aggregation (also called scan) arises when data and queries are linearly ordered, say are both elements of $\RR$, and $\hat q=\{d\in D~|~d< q\}$. Then we wish to compute $\F_{d<q}w(d).$ The eponymous case is when $D=Q$, so the aggregations are applied to all prefixes of the whole sequence of data points.

Concerning the importance of prefix aggregation as a computing primitive, it is well-known that, no matter what practical or theoretical parallel computation model is considered, sorting 
and scan are among the very first algorithms to be developed. The importance of the latter has even led to patents around the idea of including a prefix sum in the instruction set of a microprocessor~\cite{vishkin-patent}, attempts of hardware implementations~\cite{lin,scan-hardware1,scan-hardware2} or adapting the actual algorithm to the architecture of the processor, sequential~\cite{lai2022efficient} or parallel~\cite{scan-CUDA}. Attempts of formal verification~\cite{scan-formal,hinze} also exist.

Scans were also researched as a primitive to implement parallel variants of many algorithms, including radix sort, quicksort, lexical analysis, polynomial evaluation, stream compaction, histograms and string comparison~\cite{blelloch,blelloch2}, and also geometric partitioning algorithms~\cite{Teng}.

\subsection{Aggregation over dominated points}

In this problem, data and queries are points in the $m$-dimensional space  $\RR^m$. For two such points we write $(x_1,\ldots,x_m)<(y_1,\ldots,y_m)$ when inequalities hold coordinate-wise, i.e., $x_1<y_1,\ldots,x_m<y_m.$ Then let $\hat{q}=\{d\in D|d<q\}$.

The result of the algorithm should therefore consist of $\F_{d<q} w(d)$ for each query point $q$, which is the result of applying $\F$ to the set of all weights of points in $D$ which are \emph{dominated} by $q$, i.e., coordinate-wise smaller than $q$.

Aggregation over dominated points can obviously be considered as a multidimensional generalisation of 1-dimensional prefix aggregation. Note however that typically prefix aggregation uses an associative operation $\F$ with unit, while in the multidimensional setting we also require it to be commutative. The reason is that prefix aggregation has a clear order in which the elements are aggregated. In multidimensional setting there is no such natural order and hence, for the sake of producing a deterministic result, we require commutativity.

Aggregation over dominated points, referred to as \emph{general prefix computations}, has been shown to be a parallel primitive which allows expressing many computational problems~\cite{springsteel}. This approach has been subsequently extended to a general parallel computation model called {\em Broadcast with Selective Reduction} PRAM (BSR for short). The multiple criteria variant of BSR, introduced by Akl and Stojmenović~\cite{akl&st1,akl&st2} after a number of earlier papers about single criterion BSR, is the one whose only parallel primitive is aggregation over dominated points. Many computational problems have been then shown to have constant-round BSR algorithms including: counting intersections of isothetic line segments, vertical segment visibility, maximal elements in $m$ dimensions, ECDF searching, 2-set dominance counting and rectangle containment in $m$ dimensions, rectangle enclosure and intersection counting in $m$ dimensions~\cite{akl&st1}, all nearest smaller values~\cite{xiang1998ansv}, all nearest neighbours and furthest pairs of points in a plane in $L_1$ metric, the all nearest foreign neighbours in $L_1$ metric and the all furthest foreign pairs of points in the plane in $L_1$ metric~\cite{melter1995constant}. All of them therefore can be expressed by aggregation over dominated points.

Other applications of this primitive include calculating Empirical Cumulative Distribution Functions (ECDFs) in statistics, which are required in multivariate Kolmogorov-Smirnov, Cramér-von Mises and Anderson-Darling statistical tests~\cite{ECDF2}. The problem is also intensively studied, under the name range-aggregate queries by the spatial database community~\cite{AGARWAL,tao2004range,shi2021} and data warehouse community~\cite{hong}. However, in this area one typically does not assume queries to be known in advance, so the focus is rather on storing data points in a data structure which allows efficient querying.

\subsection{Our contribution}

We prove the following.

\begin{theorem}\label{thm1}
Aggregation over dominated points in $\RR^m$, where $m$ is constant, can be computed in $O(1)$ basic operations: sorting of lists of tuples, zipping and computing prefix aggregations as well as flat maps over such lists. By using flat lists, our algorithm expands the initial size of the input data from $n$ to  $O(n\log^{m-1}n)$ tuples. 
\end{theorem}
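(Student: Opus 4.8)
The plan is to flatten the classical multidimensional divide-and-conquer of Bentley into a constant number of the allowed bulk operations, trading recursion depth for a $\log^{m-1}n$ blow-up in data size produced by $\flatmap$. I eliminate the coordinates one at a time, from the $m$-th down to the second, each elimination costing one $\log n$ factor; the first coordinate is then dispatched by a single $\s$.

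Fix the $m$-th coordinate. After a $\sort$ of $D$ on this coordinate I read off, via a $\s$, the rank of every data point and thereby place the points at the leaves of a balanced binary tree $T$ of height $\lceil\log n\rceil$. The two decomposition facts I rely on are standard for such a tree: every leaf has exactly $\lceil\log n\rceil$ ancestors, and for every threshold the prefix of leaves below it is the disjoint union of $O(\log n)$ canonical subtrees. Accordingly I apply one $\flatmap$ that replaces each data point by the $O(\log n)$ pairs (\emph{ancestor node id}, point), and---after computing each query's rank among the data by a $\zip$ of the sorted queries against the sorted data followed by a $\s$---a second $\flatmap$ that replaces each query by the $O(\log n)$ pairs (\emph{canonical node id}, query). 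The crucial invariant is that a query matched to a node $v$ dominates, in the $m$-th coordinate, exactly the data points sitting in the subtree of $v$; the $m$-th coordinate is now fully accounted for, and inside each node $v$ there remains an independent aggregation-over-dominated-points instance in the first $m-1$ coordinates.

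The single idea that keeps the operation count constant rather than proportional to the number of nodes is to solve all of these per-node subproblems \emph{simultaneously}: I prepend the node id to the grouping key and treat the whole expanded collection as one batched instance of the $(m-1)$-dimensional problem. Iterating the elimination for coordinates $m, m-1,\ldots,2$ accumulates a compound key of node ids; each iteration is $O(1)$ operations and multiplies the size by $O(\log n)$, so after the $m-1$ iterations the data has grown to $O(n\log^{m-1}n)$ and the running key identifies a cell in which all higher coordinates are already resolved. The base case is a $\sort$ on the first coordinate within each cell followed by a segmented scan $\ss$ keyed on the compound node id, which computes for every query copy the $\F$-aggregate of the co-resident dominated data in one pass. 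Finally I undo the replication of queries: a $\sort$ on the original query identifier followed by an $\ss$ with $\F$ combines, for each query, the $O(\log^{m-1}n)$ partial results coming from its canonical cells into the desired value $\F_{d<q}w(d)$.

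The main obstacle I anticipate is precisely this flattening argument: one must verify that batching independent subtree-problems under a common sort key never lets data from one canonical cell leak into another, and that a single $\ss$ respects the cell boundaries so that all $O(\log^{m-1}n)$ one-dimensional scans are performed correctly and in parallel. This rests on two checks---that the compound node ids admit a total order under which cells occupy contiguous, non-overlapping ranges after $\sort$, and that the canonical-decomposition counts compose multiplicatively across the $m-1$ coordinates to give the stated $\log^{m-1}n$ bound without hidden dependence on the values of the points. Both are bookkeeping rather than deep, but they are where the correctness and the size bound are actually earned; the geometric content reduces entirely to the two elementary facts about the binary tree stated above.
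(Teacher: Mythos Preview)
Your proposal is correct and is essentially the paper's own argument: pass to rank space, expand each point via $\flatmap$ to $O(\log n)$ binary-tree node labels per eliminated coordinate, then run a segmented scan on the compound label followed by one on the query identifier. The only cosmetic difference is that the paper collapses your $m-1$ iterative eliminations into a single $\flatmap$ using the bit-prefix sets $P0$ (for data) and $P1$ (for queries), which intersect in a unique element iff $x<y$---the same combinatorics as your ``data to all ancestors, queries to canonical subtree roots'' split, with the keys shifted by one tree level.
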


Our work thus creates a direct link between so far separate parallel primitives. In this respect, we follow the paradigm presented by Blelloch~\cite{blelloch,blelloch2}, who has shown that many computational problems can be expressed by prefix aggregation. Our result does not add just one more such problem, but, by transitivity, all problems which have been previously shown to be expressible by aggregation over dominated points. An interesting theoretical conclusion can be drawn, that prefix aggregation suffices to express aggregation over dominated points, i.e., its own multidimensional generalisation. 

Seen from another perspective, our work significantly simplifies our earlier algorithm presented in Sroka et~al.~\cite{sroka2017towards}. It is a constant-rounds algorithms for solving the counting variant of the problem, written for MapReduce, and esigned to be \emph{minimal} in the sense proposed by Tao et al.~\cite{Tao2013}, guaranteeing even distribution of computation among worker nodes. It distributes range trees explicitly, and we use the same method here. However, it uses recursion to deal with consecutive dimensions and minimal group-by method from~\cite{Tao2013} to aggregate the counts. We regard it as a new contribution that all data processing tasks of this algorithm are replaced by invocations of a very few simple, high level primitives of well-understood behaviours, and that this result neatly connects two primitives, each one with its own history of research of algorithms it can express. 

This switch from a particular parallel model to high-level parallel primitives makes the algorithm simpler to understand, reducing the number and level of details which must be taken care of. Another benefit is the fact that the algorithm now avoids any direct references to the mechanisms of the parallel hardware it is running on, like processors, messages, shared resources, etc. It requires exactly those, which are used by the underlying implementations of of the primitives we rely on. 
Among them, prefix aggregation is the only one, which allows combining values from an unbounded number of data elements together. However, the role of flat maps is also crucial, because they distribute certain computations, the results of which prefix aggregation later reduces. Finally, scans allow distributing many algorithms which in the centralised setting are based on sorting and iterating over data. Such approach has many advantages similar to those pointed out in~\cite{Tao2013}, e.g., the resulting algorithms have strong guarantees concerning the way they distribute work and load. The ideas we present this paper can be viewed as generalisation of such approach to multidimensional setting the same way as range tree generalises binary search tree.

There are two other notable parallel algorithms for aggregation over dominated points. The one from Hu et al.~\cite{hu2017}[Theorem 5] and is written for MPC model and achieves load $O(Np^{-1}\log^{m-1}p)$ with $p$ processors. 
There is also an algorithm by Yufei Tao~\cite{Tao2018}[Theorem 5] that uses an entirely different idea. It is directly tailored for the MPC model and takes care of reducing the maximal amount of communication between processors. It is based on partitioning space into fragments, recursively solving problem over them and finally aggregating partial results. It achieves the optimal load $m^{O(m)}N/p$ with $p$ processors. As far as we can understand, neither of them cannot be adapted to use prefix aggregation as its main mechanism. 

\section{Primitives}

\subsection{Data model}

We assume the data to be stored in immutable but ordered lists. We are going to transform lists into new lists. Such approach and immutability is typical for distributed architectures, while ordering can be achieved by imposing some order on nodes in the cluster and distributing values such that successive nodes have increasing elements. Initial ordering of the input data is arbitrary, but we assume that input values are equipped with some numerical IDs that define it and can be used to break ties if needed.

The initial list of data points (vectors in $\RR^m$) is going to be referred to as $D$ and the list of queries as $Q$. The weights of data points are represented by weight function $w:D\to A$. To make the exposition simpler, we assume that weights are defined for queries, too, and $w(q)=e$ for $q\in Q$, so that they do not interfere with aggregation.

\subsection{Primitives and macros}

In this section we postulate primitive operations that are used to express our algorithms as well we define some convenient macros that combine them.

\begin{definition}[Sort]
    For $x=[x_0,\ldots,x_n]$ and some linear order relation $\preceq\subseteq X\times X$
\[\sort(x, \preceq) = [x_{i_1}, \ldots, x_{i_n}],\] where the multisets $\{\!\!\{x_{i_1}, \ldots x_{i_n}\}\!\!\}$ and $\{\!\!\{x_1, \ldots x_n\}\!\!\}$ are equal, and $x_{i_j}\preceq x_{i_{j+1}}$ for all j. 
\qed
\end{definition}

It is known that radix sort and quicksort are expressible by prefix aggregation~\cite{blelloch,blelloch2}, hence we could theoretically eliminate sorting from the list of primitives we rely on.

\begin{definition}[FlatMap]
    For $x=[x_0,\ldots,x_n]$ and $f:X\to [Y]$, which applied to an element produces a list of elements as the result:
\[\flatmap(x, f) = f(x_0)+\ldots+f(x_n),\] where $+$ is list concatenation.
\qed
\end{definition}

For convenience we define map, which expects $f:X \to Y$ to produce single elements, by using FlatMap and composing $f$ with list constructor $list()$: \[\map(x, f) := \flatmap(x, list() \circ f),\]
so that 
\[\map([x_1,\dots,x_n],f)=[f(x_1),\ldots,f(x_n)]\]

Zip is an operation which takes two (or more) lists of equal length and combines them into a single list of tuples, created from elements at the same positions.

\begin{definition}[Zip]
For lists $x^1=[x^1_1, \ldots, x^1_n],\ldots,x^k=[x^k_1, \ldots, x^k_n]$:
\[\zip(x^1,\ldots,x^k)=[(x^1_1,\ldots,x^k_1),\ldots, (x^1_n,\ldots,x^k_n)].\]
\qed
\end{definition}

As usually immediately after zip we want to do something with those tuples, we define macros, which map or flatmap a provided function $f:X^k\to Y$ or $f:X^k\to [Y]$ on the tuples immediately: 
\[\mzip(x^1,\ldots,x^k,f):=\map(\zip(x^1,\ldots,x^k), f),\]
\[\fmzip(x^1,\ldots,x^k,f):=\flatmap(\zip(x^1,\ldots,x^k), f).\]

We define three variants of prefix aggregation of a list $[a_1,\ldots, a_n]$ of elements of $A$, known from literature.

\begin{definition}[Scan]
For an aggregation operation $\F:A\times A\to A$, its natural element $e$ and a list $[a_1,\ldots, a_n]$ of elements of $A$:
\[\s([a_1,\ldots, a_n], \F)=[\F(a_1),\F(a_1,a_2),\ldots,\F(a_1,\ldots,a_n)],\]
\[\s^-([a_1,\ldots, a_n], \F)=[\F(e),\F(e,a_1),\ldots,\F(e,a_1,\ldots,a_{n-1})].\]\qed
\end{definition}

As we need $\s^-$ only for aggregating nondecreasing lists of numerical values with $\F=\max$ and $e=-\infty$ we define a handy macro: $\shift([x_1,\ldots,x_n]) := \s^-([x_1,\ldots,x_n],{\max})=[-\infty,x_1,\ldots,x_{n-1}]$, which is indeed a right-shift of its input if $[x_1,\ldots,x_n]$ is nondecreasing.

Another useful macro for lists of numerical values with $\F=\max$ and $e=-\infty$ is the following: $\bm([x_1,\ldots,x_n]) := \s(\sort([x_1,\ldots,x_n],\geq),{\max})=[x,\ldots,x]$ where $x=\max(x_1,\ldots,x_n)$. This macro indeed broadcasts the maximal value in a list to all positions. By zipping this list with another list we assure that the maximal value can be used for local processing of the latter, by $\map$.

\begin{definition}[Segmented scan]
For an aggregation operation $\F:A\times A\to A$ and two lists $[a_1,\ldots, a_n]$ of elements of $A$ and $[t_1,\ldots, t_n]$ of elements of some other set $T$, where the latter list is sorted:
\[\ss([a_1,\ldots, a_n],[t_1,\ldots, t_n],\F) = [\F\nolimits_{\stackrel{i\leq 1}{t_{i}=t_{1}}} a_{i}, \F\nolimits_{\stackrel{i\leq 2}{t_{i}=t_{2}}} a_{i}, \ldots, 
\F\nolimits_{\stackrel{i\leq n}{t_{i}=t_{n}}} a_{i}].\]
\end{definition}

This means, that we essentially decompose the first argument list into maximal segments over which the corresponding elements of the second list remain identical, and then compute $\s(s,{\F})$ for each segment $s$ separately, e.g., $\ss([1,2,3,4,5,6],[0,0,1,1,1,2],\Sigma)=[1,3,3,7,12,6]$.

Segmented scan can be expressed by standard scan (see~\cite{blelloch}), but is typically designed and implemented independently, which gives a chance for better performance, in particular on complex, constrained architectures, such as GPU~\cite{segmented}. It can also be implemented as a data oblivious algorithm, whose memory access pattern is independent of the actual data being processed~\cite{oblivious}. Note that scan is our only operation for combining unbounded number of elements, here by aggregation into a single value.

The primitives described in this section essentially define our model of hardware the algorithm is running on.

\section{Checking dominance by polylogarithmic data expansion}

In this section we present an important tool we need in our algorithm. It allows us to distribute the process of checking dominance relations later on. It can be viewed as a method to distribute a range tree that allows to query multidimensional data. It derives from the paper~\cite{sroka2017towards}.
 
We consider natural numbers written in binary notation, padded to some fixed length with leading 0's. Let $x$ be a bitstring. Then let $P0(x)$ be the set of all bitstrings $v$ such that $v0$ is a prefix of $x$, including the empty prefix, should $x$ start with a 0, e.g., $P0(01010)=\{0101, 01, \varepsilon\}$. Similarly, let $P1(x)$ be the set of all bitstrings $v$, such that $v1$ is a prefix of $x$.

\begin{lemma}\label{L1}
Suppose $x$ and $y$ are natural numbers represented as bitstrings of equal length, perhaps with leading 0's.

If $x<y$ then $P0(x)\cap P1(y)$ has exactly one element, and if $x\geq y$ then $P0(x)\cap P1(y)=\emptyset$. 
\end{lemma}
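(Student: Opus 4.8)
My plan is to reduce the whole statement to the standard characterization of the order on equal-length bitstrings: for $x=x_1\cdots x_\ell$ and $y=y_1\cdots y_\ell$, we have $x<y$ if and only if $x$ and $y$ agree on some initial segment, after which $x$ carries a $0$ and $y$ carries a $1$ — and this first point of disagreement is unique. So first I would unpack the two set definitions positionally. A string $v$ lies in $P0(x)$ exactly when $v=x_1\cdots x_k$ for some $k$ with $x_{k+1}=0$ (that is, $v$ is the prefix of $x$ sitting immediately before some $0$-bit of $x$), and symmetrically $v\in P1(y)$ exactly when $v=y_1\cdots y_k$ with $y_{k+1}=1$. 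Here $k$ ranges over $0,\dots,\ell-1$, with $k=0$ giving the empty prefix $\varepsilon$.

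Next I would combine the two conditions. If $v\in P0(x)\cap P1(y)$ with $|v|=k$, then $v$ is simultaneously a prefix of $x$ and of $y$, so $x$ and $y$ agree on their first $k$ symbols; moreover $x_{k+1}=0$ while $y_{k+1}=1$. Hence position $k+1$ is precisely a place where $x$ and $y$ first diverge, with the $0/1$ pattern that forces $x<y$. Running this implication backwards yields existence for the first case: when $x<y$, let $k$ be the first index where they differ; the order characterization gives $x_{k+1}=0$ and $y_{k+1}=1$, and then $v=x_1\cdots x_k=y_1\cdots y_k$ witnesses $v\in P0(x)\cap P1(y)$. Read contrapositively, the same implication immediately disposes of the second case: if $x\geq y$ then the intersection cannot be nonempty, since a member would force $x<y$.

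For uniqueness I would argue that any element of the intersection must sit at the very first disagreement. Suppose $v,v'$ both lie in $P0(x)\cap P1(y)$; both are prefixes of $x$, hence nested, say $|v|\leq|v'|$. If $|v|<|v'|$, then $v'$ is a common prefix of $x$ and $y$ of length at least $|v|+1$, so $x_{|v|+1}=y_{|v|+1}$, contradicting $x_{|v|+1}=0\neq 1=y_{|v|+1}$ coming from $v\in P0(x)\cap P1(y)$. Thus $|v|=|v'|$, and equal-length nested prefixes coincide, so $v=v'$.

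I do not expect a genuine obstacle: the statement is elementary once the positional reformulation is in place, and each case follows from the single fact about first-differing bits. The one point needing care is the bookkeeping at the boundary ($k=0$, i.e. $v=\varepsilon$, corresponding to a leading $0$ of $x$), together with the consistent use of the equal-length padding so that positions in $x$ and $y$ actually align; the argument above is designed to cover this uniformly by letting $k$ run over $0,\dots,\ell-1$.
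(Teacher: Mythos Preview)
Your proof is correct and follows essentially the same approach as the paper's: both reduce to the characterization that $x<y$ holds precisely when the longest common prefix of $x$ and $y$ is followed by a $0$ in $x$ and a $1$ in $y$, which immediately gives existence and the emptiness case; your uniqueness argument via nested prefixes is just a more explicit version of the paper's remark that the longest element of the intersection must also be the shortest because it is followed by different symbols in $x$ and $y$.
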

\begin{proof}
$x<y$ iff their longest common prefix is followed by 0 in $x$ and by 1 in $y$. Hence $P0(x)\cap P1(y)$ is nonempty iff $x<y$, which takes care of the $x\geq y$ part.

To rule out the possibility that $x<y$ and $P0(x)\cap P1(y)$ has more than 1 element, it is enough to observe that the longest element in $P0(x)\cap P1(y)$ is at the same time the shortest one because it has to be followed by different symbols in $x$ and $y$.
\end{proof}

Let $(x_1,\ldots,x_n)$ be a tuple of bitstrings. Define $P0((x_1,\ldots,x_n))=P0(x_1)\times\ldots\times P0(x_n),$ and similarly $P1((x_1,\ldots,x_n))=P1(x_1)\times\ldots\times P1(x_n).$

\begin{lemma}\label{L2}
Let $\vec{x}=(x_1,\ldots,x_n)$ and $\vec{y}=(y_1,\ldots,y_n)$ be two tuples of natural numbers encoded as bitstrings, coordinate-wise of equal lengths.

If $\vec{x}$ is coordinate-wise smaller than $\vec{y}$ then $P0(\vec{x})\cap P1(\vec{y})$ is a singleton, and otherwise it is empty. 
\end{lemma}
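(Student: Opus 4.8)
The plan is to reduce Lemma~\ref{L2} to the one-dimensional Lemma~\ref{L1} coordinate by coordinate, exploiting the fact that both the Cartesian product structure and the domination relation decompose across coordinates. First I would unfold the definitions: by construction $P0(\vec{x})\cap P1(\vec{y}) = \bigl(P0(x_1)\times\cdots\times P0(x_n)\bigr)\cap\bigl(P1(y_1)\times\cdots\times P1(y_n)\bigr)$, and since a tuple lies in a Cartesian product exactly when each of its components lies in the corresponding factor, this intersection equals the product of the coordinate-wise intersections, namely $\bigl(P0(x_1)\cap P1(y_1)\bigr)\times\cdots\times\bigl(P0(x_n)\cap P1(y_n)\bigr)$. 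This is the key structural observation, and it is elementary once the product form of the intersection is made explicit.

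Once the intersection is written as a product of coordinate-wise intersections, the result follows by applying Lemma~\ref{L1} to each factor. If $\vec{x}$ is coordinate-wise smaller than $\vec{y}$, then $x_i<y_i$ for every $i$, so each factor $P0(x_i)\cap P1(y_i)$ is a singleton by Lemma~\ref{L1}; a product of $n$ singletons is itself a singleton, giving the first claim. Otherwise, $\vec{x}$ fails to be coordinate-wise smaller than $\vec{y}$, meaning there is at least one index $i$ with $x_i\geq y_i$; for that coordinate Lemma~\ref{L1} yields $P0(x_i)\cap P1(y_i)=\emptyset$, and a product in which even one factor is empty is empty, giving the second claim.

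I anticipate no serious obstacle here, since the lemma is essentially a product-decomposition argument layered on top of the already-proved one-dimensional case. The only point demanding a little care is the precise logical reading of ``coordinate-wise smaller'': the negation is an existential (\emph{some} coordinate fails), not a universal, so I must be careful that the empty-intersection case is triggered by a single bad coordinate rather than requiring all of them to fail. Stating the product identity $\prod_i (P0(x_i)\cap P1(y_i))$ explicitly makes both the singleton count and the emptiness propagation transparent, so the write-up can be kept to a few lines invoking Lemma~\ref{L1} in each coordinate.
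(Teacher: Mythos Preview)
Your proposal is correct and is exactly the argument the paper has in mind: the paper's own proof is the single line ``Follows from Lemma~\ref{L1},'' and your write-up simply makes the coordinate-wise product decomposition $P0(\vec{x})\cap P1(\vec{y})=\prod_i\bigl(P0(x_i)\cap P1(y_i)\bigr)$ explicit before invoking Lemma~\ref{L1} on each factor. Your care with the existential reading of the negation of ``coordinate-wise smaller'' is appropriate and matches what is needed.
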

\begin{proof}
Follows from Lemma~\ref{L1}.
\end{proof}

\section{Algorithm}

We present the algorithm in several groups of numbered instructions and for each add explanations.

We are going to be working on data and query points together, so first the union $DQ=D\cup Q$ of those two lists is created. 
\begin{mdframed}
\begin{tabbing}
0~~~~\=\ $DQ=\fmzip(D,Q,\lambda x,y.[x,y])$\\
\end{tabbing}
\end{mdframed}

The next group of instructions is used to compute, for each  dimension, the rank of each tuple's coordinate in that dimension and the total number of unique coordinates.

Let $less:\RR\times \RR\to \RR$ be defined as 
\[less(a,b)=\left\{\begin{array}{ll}1&\text{if}~a<b\\0&\text{if}~a\geq b\end{array}\right..\]

For each dimension $i=1,\ldots,m$ we add the following instructions.

\begin{mdframed}
\begin{tabbing}
$5i-4$~~~~\=    $S_i=\map(DQ,\lambda x.x[i])$\\
$5i-3$\>    $T_i=\sort(S_i,\leq)$\\
$5i-2$\>    $W_i=\shift(T_i)$\\
$5i-1$\>    $R_i=\s(\mzip(W_i,T_i,less),\Sigma)$\\
$5i$\>    $U_i=\bm(R_i)$
\end{tabbing}
\end{mdframed}


Line $5i-4$ extracts the sequence of $i$-th coordinates of vectors from $DQ$, which is then sorted in line $5i-3$, so that it can be shifted by one position to the right in line $5i-2$. Then $less$ in line $5i-1$ essentially compares each element of $T_i$ with its predecessor and produces a list of $1s$ and $0s$, with $1$ on positions with a difference and $0$ otherwise. Therefore prefix sum of that sequence computes ranks of the elements of $T_i$. In particular, on the last index there will be total number of unique elements. We need a list with this value present at every position. It is computed in line $5i$ with $\bm$.

Now we transform each rank into its binary representation of fixed length which can be viewed as coordinates of that value in a binary search tree.

Let $bin(n,k)$ for $n\leq k$ be defined as a binary expansion of $n$ using exactly $\lceil \log k\rceil$ binary digits, i.e., with leading zeros if necessary. 

Again for each dimension $i=1,\ldots,m$ we add the following instructions.
\begin{mdframed}
\begin{tabbing}
$5m+i$~~~~    $BR_i=\mzip(R_i,U_i,bin)$\\
\end{tabbing}
\end{mdframed}

The instructions in lines $5m+1,\ldots,6m$ transform the original data in each dimension to the rank space, where values are replaced by their ranks within the whole dataset, written in binary. Ranks isomorphically preserve all inequalities of the original real values, and hence preserve the results of aggregations to be computed, too. 

Let $P0(x)$ and $P1(x)$ be as in Lemma~\ref{L2}.
Furthermore, let $P$ be a function from $m$-tuples of binary strings to sets of $m$-tuples of binary strings, defined as follows:
\[P(b_1,\ldots,b_m)=
\left\{\begin{array}{ll}
P0(b_1,\ldots,b_m)&\text{if $(b_1,\ldots,b_m)$ is a data point;}\\
P1(b_1,\ldots,b_m)&\text{if $(b_1,\ldots,b_m)$ is a query point.}
\end{array}
\right.\]

\begin{mdframed}
\begin{tabbing}
$6m+1$~~~~ $EDQ=\fmzip(BR_1,\ldots,BR_m,P)$
\end{tabbing}
\end{mdframed}

This results in a new, expanded sequence $EDQ$ consisting of tuples of bitstrings. We assume they inherit ID and weights from their originating data and query points. This sequence will contain many duplicated tuples, but with different IDs. This operation increases the total size of data from $O(n)$ to $O(n\log^mn).$

Let $\leq_{lex}$ be doubly lexicographic ordering relation on tuples from $EDQ$: lexicographic for bitstrings in each coordinate and lexicographic between  coordinates, with the (crucial) additional requirement, that in case of a tie of several tuples data points precede queries.
\begin{mdframed}
\begin{tabbing}
$6m+2$~~~~     $SEDQ=\sort(EDQ,\leq_{lex})$\\
\end{tabbing}
\end{mdframed}

$SEDQ$ is composed of segments consisting of duplicates, differing only by IDs and weights. Data points come before queries among each segment of equal values.  At this point we have prepared all data we need, it remains to perform several steps of aggregations.

\begin{mdframed}
\begin{tabbing}
$6m+3$~~~~ $A_1=\ss(\map(SEDQ,\lambda x.w(x)),SEDQ,{\F})$\\
\end{tabbing}
\end{mdframed}

In line $6m+3$ we compute segmented prefix sums of the weights of the sorted data and queries. All bitstring coordinates serve as tags for segmentation. This way a list is created, where each query point $q$ corresponds (by position on the list) to the aggregation of all weights of data points which yielded the same tuple of bitstrings. By Lemma~\ref{L2}, each such identical tuple originates form a data point $d$ such that $d<q,$ and, moreover, for each fixed tuple $t$ derived from $q$ there is one-to-one correspondence between such tuples and data points which also produced $t$ and are (therefore) dominated by $q$.

Queries have been assigned neutral weight, so they do not interfere with the scan. However, some aggregation happens at the positions of data points, too.

Let $\leq_{ID}$ be ordering relation on tuples which compares their associated $ID$ values. 

\begin{mdframed}
\begin{tabbing}
$6m+4$~~~~\= $A_2=\sort(A_1,\geq_{ID})$\\
$6m+5$\>     $A_3=\ss(A_2,\map(SEDQ,\lambda x.ID(x)),\F)$
\end{tabbing}
\end{mdframed}

Line $6m+4$ is a sort of partial aggregations by $ID$ in nonincreasing sequence, which creates a continuous segment of aggregation values corresponding to each query. There are also separate segments of data points, which are irrelevant now. 

After that in line $6m+5$ we compute segmented prefix sums of the already partially aggregated weights of the sorted data and queries, whereby their $ID$ values serve as tags for segmentation. This way partial aggregations for each query are further aggregated, so that the total aggregation of each query $q$ corresponds, by position, to the last tuple which originated from $q.$ This total aggregation for $q$ comes from all $d$ such that $P(q)\times P(d)\neq \emptyset$ (those sets are created in line $6m+1$). By Lemma~\ref{L2}, $\{d|P(q)\cap P(d) \neq \emptyset\}=\{d|d<q\}$, and each such $d$ in the r.h.s. is witnessed by exactly one element of $P(q)\cap P(d)$. 

At this point the aggregation of each query is already computed, but the data contains many partial aggregations for the same query, too. Therefore the last task is to distribute the total aggregation of each query to all tuples with the same ID. This task can be significantly simplified if $A$ is linearly ordered and $F(a,b)\geq a$ for all $a,b\in A.$ In this case using a segmented variant of $\bm$ with ID defining segmentation does immediately the work. Otherwise the method to achieve the goal is more complex and described below. 

Let $neutral\_if\_eq(id_1,id_2,a)=\left\{\begin{array}{ll}e&\text{if}~id_1=id_2\\a&\text{if}~id_1\neq id_2\end{array}\right..$

\begin{mdframed}
\begin{tabbing}
$6m+6$~~~~\= $A_4=\sort(A_3,\leq_{ID})$\\
$6m+7$\>     $H=\shift(\map(A_4,\lambda x.ID(x)))$\\
$6m+8$\>     $A_5=\mzip(A_4,H,\map(A_4,\lambda x.ID(x)),neutral\_if\_eq)$\\
$6m+9$\>     $Aggregated=\ss(A_5,\map(\lambda x.ID(x),A_5),\F)$
\end{tabbing}
\end{mdframed}

In line $6m+6$ we revert the sort order of $A_4.$ Now the complete aggregations come at the beginning of each segment, and, moreover, the $ID$ values are nondecreasing, hence we can shift them in line $6m+7$. Line $6m+8$ resets all computed weights to the neutral $e$, except at the beginning of each segment, where the complete aggregation is present. Finally, segmented scan in line $6m+9$ aggregates these values with neutral elements elsewhere. 

Now each query ID is accompanied by the aggregation of weights of its dominated points, which means that the desired output has been computed. In total it took $6m+9$ instructions and at most that many intermediate lists of data created. 

\section{Improvement by one logarithm}

It is possible to reduce the amount of data generated by a factor of $\log n$. 

One of the coordinates (say: the last one) is chosen. It is not replaced by ranks and left in the form of real numbers. The remaining ones are replaced by ranks and prefixes are generated, exactly as in the basic algorithm, from both data points and queries. This results in a multiset of $\leq n\log^{m-1} n$ tuples with $m-1$ coordinates in the form of bitstrings and the last, $m$-th coordinate being real number. Identifiers are retained. 

Now sort the data and queries $EDQ$ into $SEDQ$ (see line $6m+2$ and explanation thereof above) according to the doubly lexicographic order, with queries preceding data points in case of equality of all coordinates (including the $m$-th).

This results in segments of data elements with $m-1$ first coordinates equal, sorted according to the last coordinate within the segment. Then the remainder of the algorithm is executed exactly as in the basic version.

\section{Relation to range trees and complexity}

The algorithm we have presented above is derived from our map-reduce algorithm~\cite{sroka2017towards}, and is indeed a parallelisation of the common sequential algorithm, based on range trees. The move to the rank space with ranks expressed as binary expansions is equivalent to speaking about elements in terms of their positions in a balanced binary tree, whose leaves hold the sorted data. The binary encodings then correspond to branches in the tree, and their prefixes to the positions where attached trees of smaller dimension are located. 

Our algorithm inherits its total data complexity of $O(n\log^{m-1}n)$ from the range tree algorithm. This distributed data structure is generated by flat maps, while the parallelisation is achieved by expressing operations on the tree in terms of sorting, zipping and prefix aggregation. In the older paper we didn't present complexity analysis, only tests of an implementation on MapReduce, for which it was designed. In particular, the running times did scale linearly with the number of machines. 

Time complexity of the present algorithm depends very much on the underlying architecture and complexity of the primitive operations, but its analysis is pretty straightforward in each case, since it is a fixed length sequence of operations of very well known properties, applied to lists of data of sizes easy to determine. In particular, no matter what architecture it is executed on, if the implementations of primitives do the same total work as their sequential variants, then the whole algorithm will also have the total work of the sequential algorithm using range trees.
Indeed, in the sequential case the only nonlinear (and thus dominating) operation is sorting, and the size of the data is $O(n\log^{m-1}n)$. One linearithmic sort takes time $O(n\log^{m-1}n\cdot \log(n\log^{m-1}n))=O(n\log^{m}n)$,
which is equal to the worst case of the standard sequential implementation, calculated as creating the range tree with $n/2$ data points and then processing $n/2$ queries by this tree. There is a one logarithm better variant of Chazelle~\cite{chazelle1990}, which however works only for counting.

For the MPC model, it is known that for $\delta>0$, sorting and scanning of $n$ values can be performed deterministically in a constant number of rounds using $n^\delta$ space per machine, $O(n)$ total space, and $poly(n)$ local computation, which follows directly from analogous bounds for MapReduce computation. The load can be made $O(N/p)$~\cite{goodrich}. This implies that our algorithm in dimension $m$ can be implemented deterministically in $O(1)$ rounds, with $n^\delta$ space per machine, $O(n\log^{m-1}n)$ total space and $poly(n)$ local computation, with load $O(N\log^{m-1}N/p).$ 

By comparison, the algorithm by Hu et al.~\cite{hu2017} achieves load $O(Np^{-1}\log^{m-1}p)$ with $p$ processors, which is better than what get in this paper. This is not surprising, since our reliance on high-level primitives gives us much less freedom in designing the computation mechanism and achieving low loads. In particular, the set of instructions we use does not permit shifting the computational effort into the local computation, which is the method to lower the loads. The algorithm of Tao~\cite{Tao2018} does it to an even higher extent, arriving at load $m^{O(m)}N/p.$

\section{Summary and future research}

In this paper we have presented algorithm for aggregating over dominated points in $\RR^m$, where $m$ is constant. Our algorithm is based on a limited set of primitive operations: sorting, prefix aggregations, zip and flat maps. All those primitive operations are well studied and their efficient implementations exist for essentially all distributed architectures. 

This proves that one-dimensional prefix aggregation allows expressing its own multidimensional generalisation. 
The latter problem has many practical applications, as well as it is known to be a parallel primitive, allowing to express in turn further problems. By transitivity, our result expresses all those problems in terms of the above mentioned primitive operations.

We consider our result to be primarily of theoretical interest. However, our algorithm might actually lead to practical implementations. First of all, it is absolutely transparent and does not hide any significant computation steps. The local computation is on the level of individual tuples, only. No large collections of data need to be broadcasted to computation nodes and there is no limit on number of such nodes. Otherwise we use high-level primitives of very well understood algorithmic properties, and whose optimised implementations exist for virtually all hardware platforms. Also the organisation of the algorithm into a sequence of functional operations (without any branch) on immutable ordered lists is very convenient for implementation. Last but not least, the choice of primitives guarantees that the algorithm has several desired properties, e.g., it is minimal in the sense of~\cite{Tao2013}.

An obvious item on the ``further research'' list is undertaking experiments with the algorithm on diverse parallel platforms. 


\newpage

\end{document}